\newtheorem{theorem}{Theorem}
\title{Extending the Accelerated Failure Conditionals Model to Location-Scale Families%
\thanks{For full functionality of the animated figures presented in this document, please view this PDF in \textbf{Adobe Acrobat Reader}. Other PDF viewers (including browser-based viewers, 
Preview) may render the figures statically.}}
\author{
  Jared N. Lakhani\\
  \textit{Department of Statistical Sciences, University of Cape Town}\\
  \texttt{lkhjar001@myuct.ac.za}
}
\date{} 
\begin{document}

\maketitle
\begin{abstract}
\citet{arnold2020bivariate} introduced a novel class of bivariate conditionally specified distributions, in which dependence between two random variables is established by defining the distribution of one variable conditional on the other. This conditioning regime was formulated through survival functions and termed the accelerated failure conditionals model. Subsequently, \citet{lakhani2025modelsacceleratedfailureconditionals} extended this conditioning framework to encompass distributional families whose marginal densities may exhibit unimodality and skewness, thereby moving beyond families with non-increasing densities. The present study builds on this line of work by proposing a conditional survival specification derived from a location-scale distributional family, where the dependence between $X$ and $Y$ arises not only through the acceleration function but also via a location function. An illustrative example of this new specification is developed using a Weibull marginal for $X$. The resulting models are fully characterized by closed-form expressions for their moments, and simulations are implemented using the Metropolis-Hastings algorithm. Finally, the model is applied to a dataset in which the empirical distribution of $Y$ lies on the real line, demonstrating the models' capacity to accommodate $Y$ marginals defined over $\mathbb{R}$.
\end{abstract}
\section{Introduction}
We extend the accelerated failure conditionals model, originally introduced by \citet{arnold2020bivariate}, to the real line by re-specifying the conditional survival function to have support on $y \in (-\infty, \infty)$. Specifically, we consider a heterogeneous family specification in which $X$ belongs to some positive support distributional family, while $Y$ belongs to a location-scale family. Accordingly, the conditional survival function is redefined to conform to this specification, with $\mu(x)$ denoting the location parameter and $\beta(x)$ the corresponding scale parameter. We note that \citet{arnold2020bivariate} originally introduced the model in which $\beta(x)$ acted as a rate parameter within the conditional survival function. The resulting general accelerated failure conditionals model is expressed as:
\begin{align}
    \bar{F}_X(x) = P(X>x) = \bar{F}_0(x),\quad x>0, \label{eq: general marginal}
\end{align}
for some survival function $\bar{F}_0(x) \in [0, 1]$ for $x >0$, and for each $x >0$:
\begin{equation}
    P(Y>y\mid X>x) = \bar{F}_1\left(\frac{y - \mu(x)}{\beta(x)}\right), \quad x>0,\; y \in \mathbb{R},\label{eq: general cond}
\end{equation}
for some survival function $\bar{F}_1\left(\frac{y - \mu(x)}{\beta(x)}\right) \in [0, 1]$ for $x>0$ and  $y \in \mathbb{R}$, a suitable acceleration (or, perhaps more aptly, deceleration) function $\beta(x) \in \mathbb{R} \setminus \{0\}$ and a suitable location function $\mu(x) \in \mathbb{R}$. We call $\beta(x)$ an acceleration function because it scales the argument of the baseline survival $\bar{F}_1$, thereby accelerating or decelerating the rate of decay of the conditional survival $P(Y>y\mid X>x)$. By contrast, we call $\mu(x)$ the location function because it shifts the argument of the baseline survival $\bar{F}_1$. As discussed by \citet{arnold2020bivariate}, in the analysis of dependent lifetimes of components in a system, it is more appropriate to consider the conditional density of $Y$ given that the first component, with lifetime $X$, remains operational at time $x$. Hence why, the conditioning event is taken to be $\{X > x\}$, rather than conditioning on the exact value $X = x$. \\

The joint survival function is given by:
\begin{align}
    P(X>x, Y>y) = \bar{F}_0(x) \bar{F}_1\left(\frac{y - \mu(x)}{\beta(x)}\right), \quad x >0, \; y \in \mathbb{R}.\label{eq: general joint}
\end{align}
Assuming differentiability and $X,Y$ are continuous, we obtain the marginal densities:
\[
f_X(x) = -\frac{d}{dx}P(X>x) = -\frac{d}{dx}\bar{F}_0(x) = f_0(x),
\]
and, since $\lim_{x\to 0^+}\bar{F}_0(x)=1$, where $\beta(0):=\lim_{x\to0^+}\beta(x)$ and $\mu(0):=\lim_{x\to 0^+}\mu(x)$ :
\[
P(Y>y) = \lim_{x\to 0^+}P(Y>y,X>x) = \bar{F}_1\left(\frac{y-\mu(0)}{\beta(0)}\right).
\]
Now, the density of $Y$ is:
\[
f_Y(y) = -\frac{d}{dy}P(Y>y) 
= -\frac{d}{dy}\bar{F}_1\left(\frac{y-\mu(0)}{\beta(0)}\right)
= \frac{1}{\beta(0)}\,f_1\left(\frac{y-\mu(0)}{\beta(0)}\right),
\]
where $f_1(t)=-\bar{F}_1'(t)$ denotes the density associated with $\bar{F}_1$. Hence the marginal distribution of $Y$ belongs to the same family as $\bar{F}_1$, but with its argument scaled by $\frac{1}{\beta(0)}$. For \eqref{eq: general joint} to be a valid survival function, it must have a non-negative mixed partial derivative. Denoting $f_0$ and $f_1$ as the densities corresponding to the survival function $\bar{F}_0$ and $\bar{F}_1$, and differentiating, we obtain:
\begin{align}
    \frac{\partial}{\partial x} \frac{\partial}{\partial y} P(X>x, Y>y)  &= 
f_0(x) f_1\left( \frac{y-\mu(x)}{\beta(x)}\right) \frac{1}{\beta(x)} 
 - \bar{F}_0(x) f_1'\left( \frac{y-\mu(x)}{\beta(x)}\right) \frac{1}{\beta(x)} \Bigg[ \frac{-\mu'(x)\beta(x) - \beta'(x)(y- \mu(x))}{\beta(x)^2} \Bigg] \nonumber \\
&+\bar{F}_0(x)  f_1\left( \frac{y-\mu(x)}{\beta(x)}\right)\frac{ \beta'(x) }{ \beta(x)^2}\nonumber \\
&=f_0(x) f_1\left( \frac{y-\mu(x)}{\beta(x)}\right) \frac{1}{\beta(x)} 
- \frac{\bar{F}_0(x)}{\beta(x)^2}  \Bigg[ -f_1'\left( \frac{y-\mu(x)}{\beta(x)}\right) \mu'(x) -\beta'(x)\Bigg( \frac{y-\mu(x)}{\beta(x)}f_1'\left( \frac{y-\mu(x)}{\beta(x)}\right) \nonumber\\
&+  f_1\left( \frac{y-\mu(x)}{\beta(x)}\right) \Bigg)     \Bigg]
 . \nonumber
\end{align}
Denoting $t = \frac{y-\mu(x)}{\beta(x)}$, we have:
\begin{align}
    \frac{\partial}{\partial x} \frac{\partial}{\partial y} P(X>x, Y>y) 
&= f_0(x) f_1\left( t\right) \frac{1}{\beta(x)} 
 - \frac{\bar{F}_0(x)}{\beta(x)^2} \Big[-f_1'\left( t\right)\mu'(x)  - \beta'(x) \big[tf_1'\left( t\right)
+ f_1\left( t\right) \big] \Big] \nonumber \\
& =  f_0(x) f_1\left( t\right) \frac{1}{\beta(x)} 
 +\frac{\bar{F}_0(x)}{\beta(x)^2} \Big[f_1'\left( t\right)\mu'(x)  + \beta'(x) \big[tf_1\left( t\right)\big]' \Big] \label{eq: t bounds}
\end{align}
To obtain interpretable bounds for $\beta'(x)$ and $\mu'(x)$ in \eqref{eq: t bounds}, we proceed by isolating their respective effects: first by holding $\mu(x)$ constant while allowing $\beta(x)$ to vary with $x$, and then by holding $\beta(x)$ constant while allowing $\mu(x)$ to vary with $x$. 
With the first case, given a fixed location, $\mu(x) = \mu$, that is $\mu'(x) = 0$, a necessary and sufficient condition for $   \frac{\partial}{\partial x} \frac{\partial}{\partial y} P(X>x, Y>y)  \geq 0$,  is:
\[
\beta'(x)\;\begin{cases}
\ge\; -\dfrac{f_0(x)}{\bar F_0(x)}\;\dfrac{f_1(t)}{[t f_1(t)]'}\;\beta(x), & \text{if } [t f_1(t)]'>0,\\[10pt]
\le\; -\dfrac{f_0(x)}{\bar F_0(x)}\;\dfrac{f_1(t)}{[t f_1(t)]'}\;\beta(x), & \text{if } [t f_1(t)]'<0, \\
\text{unconstrained}, & \text{if } [t f_1(t)]'=0.
\end{cases}
\]
Denoting the hazard function $h_0(x) = \frac{f_0(x)}{\bar{F}_0(x)}$ and $S(t) = \frac{f_1(t)}{[t f_1(t)]'}$, we may write the bounds as such:
\begin{align}
\sup_{t: [t f_1(t)]'>0}  -S(t) h_0(x)\beta(x) 
\;\leq\; \beta'(x) 
\;\leq\;
\inf_{t: [t f_1(t)]'<0} -S(t)h_0(x)\beta(x). \nonumber
\end{align}
The role of the acceleration function, $\beta(x)$, in determining the sign of the correlation between $X$ and $Y$ is formalised in Theorem \ref{thm: cov(x, y) symmetric} (Appendix \ref{app: gen theorems}). In particular, since several common location-scale distributions are symmetric about their location parameter $\mu$, the theorem shows that such symmetry yields zero covariance between $X$ and $Y$. Although this does not preclude nonlinear forms of dependence - such as scale or tail dependence - it implies the absence of linear association between the variables. Consequently, the study excludes the fixed location case, as such models would capture only scale-based dependence while exhibiting zero linear correlation, limiting their usefulness for characterising the direction of dependence.\\

Hence, we consider the second case from \eqref{eq: t bounds}; given constant acceleration, $\beta(x) = \beta$, that is $\beta'(x) = 0$, a necessary and sufficient condition for $  \frac{\partial}{\partial x} \frac{\partial}{\partial y} P(X>x, Y>y)  \geq 0$,  is:
\[
\mu'(x)\;\begin{cases}
\ge\; -\dfrac{f_0(x)}{\bar F_0(x)}\;\dfrac{f_1(t)}{ f_1'(t)}\;\beta, & \text{if } f_1'(t)>0,\\[10pt]
\le\; -\dfrac{f_0(x)}{\bar F_0(x)}\;\dfrac{f_1(t)}{ f_1'(t)}\;\beta, & \text{if } f_1'(t)<0, \\
\text{unconstrained}, & \text{if } f_1'(t)=0.
\end{cases}
\]
Denoting the hazard function $h_0(x) = \frac{f_0(x)}{\bar{F}_0(x)}$ and $Q(t) = \frac{f_1(t)}{f_1'(t)}$, we may write the bounds as such:
\begin{align}
\sup_{t: f_1'(t)>0}  -Q(t) h_0(x)\beta
\;\leq\; \mu'(x) 
\;\leq\;
\inf_{t:  f_1'(t)<0} -Q(t)h_0(x)\beta. \label{eq: mu'(x) bounds}
\end{align}
The role of the location function $\mu(x)$ in determining the sign of the correlation between $X$ and $Y$ is established in Theorem~\ref{thm: cov(x, y)} (Appendix \ref{app: gen theorems}). Accordingly, the present study uses $\mu^{+}(x)$ to denote the choice of location function that arises from the extreme case in which $\mu'(x)$ attains its theoretical upper bound, hence only allowing for modelling positively correlated $X$ and $Y$. Conversely, $\mu^{-}(x)$ denotes the location function arising from the extreme case in which $\mu'(x)$ attains its theoretical lower bound, hence only allowing for modelling negatively correlated $X$ and $Y$. Now given $\inf_{t:f_1'(t)<0}(-Q(t)) = s^* > 0$, from bounds in \eqref{eq: mu'(x) bounds}:
\begin{align}
    \frac{d\left(\mu^+(x)\right)}{dx} &= s^*h_0(x) \beta, \nonumber \\
    \therefore \int d\left(\mu^+(x)\right) &=  s^*\beta\int  h_0(x) dx, \nonumber\\
    \therefore  \mu^+(x)  &= \gamma - s^*\beta \log\left( \bar{F}_0 (x)\right), \label{eq: mu+(x) general}
    \end{align}
for $\gamma \in \mathbb{R}$. This implies $\mu^+(x)$ is non-decreasing for $s^*>0$. Furthermore, $\mu(0): = \lim_{x\to 0^+} \mu(x) = \gamma$ since $\lim_{x\to 0^+}\bar{F}_0(x) = 1$. We also note that since $\bar{F}_0(x)$ approaches zero only as $x\rightarrow \infty$, the expression in \eqref{eq: mu+(x) general} remains non-zero for all finite $x$. Similarly, if we denote $\sup_{t:f_1'(t)>0}(-Q(t)) = c^* <0$, we have:
\begin{align}
    \mu^{-}(x)  = \gamma - c^*\beta \log\left( \bar{F}_0 (x)\right), \label{eq: mu-(x) general}
\end{align}
for $\gamma \in \mathbb{R}$ and implies $\mu^-(x)$ is non-increasing for $c^*<0$.\\

Furthermore, as done in \citet{arnold2020bivariate}, we introduce a dependence parameter $\tau$ which controls the strength of the dependence between $X$ and $Y$. Extending from \eqref{eq: mu+(x) general} and \eqref{eq: mu-(x) general}, we define the location functions as such:
\begin{align}
    \mu^+(x) &= \gamma - s^*\beta \log\left( \bar{F}_0 (\tau x)\right),\label{eq: mu+(x) general 2} \\
    \mu^-(x) &= \gamma - c^*\beta \log\left( \bar{F}_0 (\tau x)\right), \label{eq: mu-(x) general 2} 
\end{align}
for $\gamma \in \mathbb{R}$, $s^*>0$ and $c^*<0$. Hence, the parameter $\tau$ modulates the rate at which $\mu(x)$ varies with $x$, thereby serving as a dependence-scaling parameter. When $\tau > 1$, the survival function $\bar{F}_0(\tau x)$ decays more rapidly in $x$, inducing a steeper change in the location function $\mu(x)$ and consequently amplifying the sensitivity of the conditional survival of $Y$ to changes in $X$. Conversely, when $\tau < 1$, the dependence between $X$ and $Y$ weakens, while $\tau = 0$ implies independence as $\mu(x)$ becomes constant in $x$. \\

Given the bounds established in \eqref{eq: mu'(x) bounds}, and upon selecting the non-decreasing specification for $\mu(x)$ as defined in \eqref{eq: mu+(x) general 2}, and again assuming $\inf_{t: f_1'(t)<0} (-Q(t)) = s^* > 0$, we obtain bounds for $\tau$ as such:
\begin{align}
    0 &\leq \frac{d(\mu^+(x))}{dx} \leq s^* h_0(x)\beta, \nonumber\\
    \therefore 
    0
    &\leq\tau s^*\beta h_0(\tau x)
    \leq s^* h_0(x)\beta, \nonumber\\
    \therefore 
    0 &\leq \tau\frac{h_0(\tau x)}{h_0(x)} \leq 1. \nonumber
\end{align}
We can express this inequality equivalently as: 
\begin{align}
    0 \le r(x, \tau) \le 1,
    \label{eq: tau bounds}
\end{align}
where $r(x, \tau) = \tau \frac{h_0(\tau x)}{h_0(x)}$. The bounds in \eqref{eq: tau bounds} also hold under the non-increasing specification of $\mu(x)$, as defined in \eqref{eq: mu-(x) general 2}. \\

\section{Weibull Marginal}
The following sections outline the range of models employed in this study. In particular, we consider a class of mixture models in which the marginal survival function of $X$ follows a Weibull distribution (although any distributional family with positive support could, in principle, be used), while the conditional survival function of $Y$ belongs to a location-scale distributional family. Specifically, we assume that $Y \sim {Family}(\mu(0), \beta)$, where the location parameter is given by $\mu(0) = \gamma \in \mathbb{R}$ and the scale parameter satisfies $\beta > 0$. Given the marginal Weibull specification, the survival function of $X$ is expressed as
\begin{align}
    \bar{F}_0(x) &= P(X>x) = e^{-(\alpha x)^\lambda}, \quad x> 0. \label{eq: weibull marginal}
\end{align}
Consequently, the corresponding density function is $f_X(x) = \alpha \lambda (\alpha x)^{\lambda - 1}e^{-(\alpha x )^\lambda}$ for $x> 0$ and hence, $X\sim Weibull(\alpha, \lambda)$ with:
\begin{align}
    E(X) &= \frac{1}{\alpha} \Gamma\left(1+ \frac{1}{\lambda}\right), \label{eq: Weibull ex} \\
    {Var}(X) &=  \frac{1}{\alpha^2}\left( \Gamma\left( 1+ \frac{2}{\lambda}\right)  - \left(\Gamma\left(1+\frac{1}{\lambda}\right)\right)^2 \right)\label{eq: Weibull varx}.
\end{align}
Furthermore, assuming $\inf_{t: f_1'(t)<0} (-Q(t)) = s^* =1$ and $\sup_{t: f_1'(t)>0} (-Q(t)) = c^* =-1$, we employ \eqref{eq: mu+(x) general 2}, \eqref{eq: mu-(x) general 2} and \eqref{eq: weibull marginal} to obtain:  
\begin{align}
     \mu^+(x) &= \gamma - s^*\beta \log\left( \bar{F}_0 (\tau x)\right) \nonumber \\
     &= \gamma + \beta \left(\alpha \tau x\right)^\lambda, \label{eq: mu+(x) weibull}\\
    \mu^-(x) &= \gamma - \beta \left(\alpha \tau x\right)^\lambda, \label{eq: mu-(x) weibull}
\end{align}
where $\gamma \in \mathbb{R}$ and $\alpha, \beta, \lambda > 0$ with $\tau \in [0, 1]$ for $x>0$. Now with the aid of Theorem \ref{thm: shift identitiy} in Appendix \ref{app: gen theorems}, and expressing the location functions in \eqref{eq: mu+(x) general 2} and \eqref{eq: mu-(x) general 2} as $\mu(x) =  \gamma + \operatorname{sgn}\!\big(\mu'(x)\big)\beta \left(\alpha \tau x\right)^\lambda$ by noting
$\operatorname{sgn}\!\big(\mu'(x)\big)=
    \begin{cases}
        +1, & \text{for } \mu^{+}(x)\ \text{(non-decreasing case)},\\[4pt]
        -1, & \text{for } \mu^{-}(x)\ \text{(non-increasing case)},
    \end{cases}$ we have:
\begin{align}
\operatorname{Cov}(X,Y)
&=\int_{0}^{\infty}
\bar F_0(x)\underbrace{\int_{-\infty}^{\infty}\Big[
\bar F_1\!\Big(\tfrac{y-\mu(x)}{\beta}\Big)-\bar F_1\!\Big(\tfrac{y-\mu(0)}{\beta}\Big)
\Big]\;dy}_{=\ \mu(x)-\mu(0)}\,dx \nonumber\\[4pt]
&=\int_{0}^{\infty}\bar F_0(x)\,[\mu(x)-\mu(0)]\,dx \nonumber\\[4pt]
&=\operatorname{sgn}(\mu'(x))\,\beta\int_{0}^{\infty}  e^{-(\alpha x)^\lambda} \left(\alpha \tau x\right)^\lambda
\,dx
\qquad\left(\text{since }\mu(x)-\mu(0)=\operatorname{sgn}(\mu'(x))\beta \left(\alpha \tau x\right)^\lambda\right)\nonumber\\[4pt]
&=\operatorname{sgn}(\mu'(x))\frac{\beta\tau^\lambda}{\alpha\lambda}\Gamma\left(1+\frac{1}{\lambda}\right),\label{eq: cov general}
\end{align}
where \eqref{eq: cov general} illustrates that higher values of $\beta$ and $\tau$ strengthen the dependence between $X$ and $Y$, whereas larger values of $\alpha$ and $\lambda$ weaken it. Furthermore, as shown by \citet{lakhani2025modelsacceleratedfailureconditionals}, in order for condition~\eqref{eq: tau bounds} to hold, the parameter $\tau$ must satisfy $\tau \in [0, 1]$, given that the Weibull distribution has a hazard function of the form $h_0(x) = \lambda\alpha (\alpha x)^{\lambda - 1} $.\\

Furthermore, the following sections present closed-form expressions of the models' moments, and additionally, theoretical correlation bounds are obtained (with the exception of the Weibull-Cauchy model since moments are undefined).

\subsection{Weibull-Logistic}
We restrict the conditional survival form to the logistic family. Hence for each $x > 0$:
\begin{align}
    \bar{F}_1\left(\frac{y-\mu(x)}{\beta} \right) &= P(Y>y\mid X>x) = 1 - \frac{1}{1+e^{-\frac{y-\mu(x)}{\beta}}} = \frac{1}{1+e^\frac{y-\mu(x)}{\beta}}, \quad x>0, \; y \in \mathbb{R},\nonumber
\end{align}
where $\beta > 0$ with $\mu(x) \in \mathbb{R}$ for all $x>0 $ to ensure a valid survival function. The corresponding joint survival function will be:
\begin{align}
    \bar{F}_0(x) \bar{F}_1\left(\frac{y-\mu(x)}{\beta} \right)  &= P(X>x,Y>y) =  \frac{ e^{-(\alpha x)^\lambda}}{1+e^\frac{y-\mu(x)}{\beta}}, \quad x > 0, \; y \in \mathbb{R}.
    \label{eq: logistic joint survival}
\end{align}
Accordingly, $f_Y(y) =\frac{e^{-\frac{y- \mu(0)}{\beta}}}{\beta\left(1+e^{-\frac{y-\mu(0)}{\beta}}\right)^2}$ for $y\in \mathbb{R}$. Hence, $Y\sim Logistic(\mu(0), \beta)$.\\

Since $f_1'(t) 
= \frac{e^t(1-e^t)}{(e^t+1)^3}
> 0$ for $t<0$ and the function $-Q(t) = \frac{1}{\tanh\left(\frac{t}{2}\right)}$
is decreasing on this interval, it follows that $\sup_{t < 0} (-Q(t)) = \lim_{t \to -\infty} (-Q(t))= c^*= -1$. Additionally, since $f_1'(t) 
= \frac{e^t(1-e^t)}{(e^t+1)^3}
< 0$ for $t>0$ and the function $-Q(t) = \frac{1}{\tanh\left(\frac{t}{2}\right)}$
is decreasing on this interval, it follows that $\inf_{t > 0} (-Q(t)) = \lim_{t \to \infty} (-Q(t)) = s^* =  1$. Hence, applying \eqref{eq: mu'(x) bounds} yields:
\begin{align}
     -h_0(x)\beta \leq \mu'(x) \leq h_0(x)\beta. \nonumber
\end{align}
Furthermore, given that $s^* = 1$ and $c^* = -1$, the location functions \eqref{eq: mu+(x) weibull} and \eqref{eq: mu-(x) weibull} may be employed.
\subsubsection{Moments}
We note that since $Y\sim Logistic(\gamma, \beta)$, we have:
\begin{align}
    E(Y) &= \gamma, \label{eq: ey logistic} \\
    Var(Y) & =\frac{\beta^2\pi^2}{3}. \label{eq: vary logistic}
\end{align}
Now using \eqref{eq: Weibull varx}, \eqref{eq: cov general} and \eqref{eq: vary logistic}, we have:
\begin{align}
    \rho(X, Y) &= \frac{Cov(X, Y)}{\sqrt{Var(X)Var(Y)}} \nonumber \\
    &=  \operatorname{sgn}(\mu'(x))\frac{\sqrt{3}}{\pi}
\frac{\tau^{\lambda}}{\lambda}
\frac{\Gamma\left(1+\frac{1}{\lambda}\right)}
{\sqrt{\Gamma\left(1+\frac{2}{\lambda}\right)
- \Gamma^2\left(1+\frac{1}{\lambda}\right)}}. \label{eq: rho logistic}
\end{align}
Consequently, \eqref{eq: rho logistic} attains its maximum magnitude at $\tau=1$ and its minimum magnitude (zero) at $\tau=0$. Therefore,
\begin{align}
0\le \lvert \rho(X,Y) \rvert \le     \rho_{\text{max}}(\lambda):= \frac{\sqrt{3}}{\pi}
\frac{1}{\lambda}
\frac{\Gamma\left(1+\frac{1}{\lambda}\right)}
{\sqrt{\Gamma\left(1+\frac{2}{\lambda}\right)
- \Gamma^2\left(1+\frac{1}{\lambda}\right)}}.\nonumber
\end{align}
Denoting $\lambda^* = \operatorname*{argmax}_{\lambda>0} \rho_{\max}$, we have $\sup_{\lambda>0}\rho_{\max}(\lambda)
= \rho_{\max}(\lambda^*=1)
= \frac{\sqrt{3}}{\pi}$. Hence, the Weibull-logistic model with location function $\mu^{+}(x)$ in \eqref{eq: mu+(x) weibull} 
can accommodate correlations in the range 
$\rho(X,Y) \in \Big[0,\, \frac{\sqrt{3}}{\pi}\approx 0.5513\Big]$, 
whereas the specification with $\mu^{-}(x)$ in \eqref{eq: mu-(x) weibull} 
can accomodate correlations in the range 
$\rho(X,Y) \in \Big[-\frac{\sqrt{3}}{\pi}\approx -0.5513,\, 0\Big]$.

\subsection{Weibull-Gumbel}
We now assume the conditional survival form belongs to the Gumbel family. Hence for each $x > 0$:
\begin{align}
    \bar{F}_1\left(\frac{y - \mu(x) }{\beta}\right) &=P(Y>y\mid X>x) = 1 - e^{-e^{-\frac{y-\mu(x)}{\beta}}},  \quad x>0,\;  y \in \mathbb{R},\nonumber
\end{align}
where $\beta> 0$ with $\mu(x) \in \mathbb{R}$ for all $x >0 $ to ensure a valid survival function. The corresponding joint survival function will be:
\begin{align}
    \bar{F}_0(x)\bar{F}_1\left(\frac{y - \mu(x) }{\beta}\right) &= P(X>x,Y>y) = e^{-(\alpha x)^\lambda}\left(1 - e^{-e^{-\frac{y-\mu(x)}{\beta}}} \right), \quad x>0,\;  y \in \mathbb{R}. \label{eq: gumbel joint survival}
\end{align}
Accordingly, $f_Y(y) = \frac{e^{-\left(e^{-\frac{y-\mu(0)}{\beta}}+\frac{y-\mu(0)}\beta\right)}}{\beta}$ for $y\in \mathbb{R}$ and hence $Y\sim Logistic(\mu(0), \beta)$.\\

Since $f_1'(t) 
= (1-e^t)e^{-2t-e^{-t}}
> 0$ for $t < 0$ and the function $-Q(t) = -\frac{e^t}{1-e^t}$
is decreasing on this interval, $\sup_{t<0}(-Q(t)) = \lim_{t\to -\infty}(-Q(t)) = c^* = 0$. Additionally, since $f_1'(t) 
= (1-e^t)e^{-2t-e^{-t}}
< 0$ for $t > 0$ and $-Q(t) = -\frac{e^t}{1-e^t}$
is decreasing on this interval, $\inf_{t>0}(-Q(t)) = \lim_{t\to \infty}(-Q(t)) = s^* = 1$. Hence, applying \eqref{eq: mu'(x) bounds} yields:
\begin{align}
    0\leq \mu'(x) \leq h_0(x)\beta. \nonumber
\end{align}
Hence, given that $s^* = 1$, the location function \eqref{eq: mu+(x) weibull} may be utilised. Moreover, since $c^* = 0$, the model is not capable of representing negative correlations.

\subsubsection{Moments}
We note that since $Y\sim Gumbel(\gamma, \beta)$, we have:
\begin{align}
    E(Y) &= \gamma +\beta\gamma^*, \label{eq: ey gumbel} \\
    Var(Y) & = \frac{\pi^2}{6}\beta^2, \label{eq: vary gumbel}
\end{align}
where $\gamma^*$ denotes the Euler-Mascheroni constant. Now using \eqref{eq: Weibull varx}, \eqref{eq: cov general} and \eqref{eq: vary gumbel}, we have:
\begin{align}
    \rho(X, Y) &= \frac{Cov(X, Y)}{\sqrt{Var(X)Var(Y)}} \nonumber \\
    &=\frac{\sqrt{6}}{\pi}
\frac{\tau^{\lambda}}{\lambda}
\frac{\Gamma\left(1+\frac{1}{\lambda}\right)}
{\sqrt{\Gamma\left(1+\frac{2}{\lambda}\right)
- \Gamma^2\left(1+\frac{1}{\lambda}\right)}}. \label{eq: rho gumbel}
\end{align}
Consequently, \eqref{eq: rho gumbel} attains its maximum magnitude at $\tau=1$ and its minimum magnitude (zero) at $\tau=0$. Therefore,
\begin{align}
0\le \lvert \rho(X,Y) \rvert \le     \rho_{\text{max}}(\lambda):= \frac{\sqrt{6}}{\pi}
\frac{1}{\lambda}
\frac{\Gamma\left(1+\frac{1}{\lambda}\right)}
{\sqrt{\Gamma\left(1+\frac{2}{\lambda}\right)
- \Gamma^2\left(1+\frac{1}{\lambda}\right)}}.\nonumber
\end{align}
Denoting $\lambda^* = \operatorname*{argmax}_{\lambda>0} \rho_{\max}$, we have $\sup_{\lambda>0}\rho_{\max}(\lambda)
= \rho_{\max}(\lambda^*=1)
= \frac{\sqrt{6}}{\pi}$. Hence, the Weibull-Gumbel model with location function $\mu^{+}(x)$ in \eqref{eq: mu+(x) weibull} 
can accommodate correlations in the range 
$\rho(X,Y) \in \Big[0,\, \tfrac{\sqrt{6}}{\pi} \approx 0.7797\Big]$.

\subsection{Weibull-Laplace}
We now restrict the conditional survival form to the Laplace family. Hence for each $x > 0$:
\begin{align}
    \bar{F}_1\left(\frac{y - \mu(x) }{\beta}\right) &= P(Y>y\mid X>x)  = \frac{1}{2} - \frac{1}{2}\operatorname{sgn}(y- \mu(x))\left(1 - e^{-\frac{\lvert y-\mu(x) \rvert}{\beta}} \right),  \quad x>0,\;  y \in \mathbb{R},\nonumber
\end{align}
where $\beta> 0$ with $\mu(x) \in \mathbb{R}$ for all $x >0$ to ensure a valid survival function. The corresponding joint survival function will be:
\begin{align}
    \bar{F}_0(x)\bar{F}_1\left(\frac{y - \mu(x) }{\beta}\right)=
    &P(X>x,Y>y) = e^{-(\alpha x)^\lambda} \left( \frac{1}{2} - \frac{1}{2}\operatorname{sgn}(y- \mu(x))\left(1 - e^{-\frac{\lvert y-\mu(x) \rvert}{\beta}} \right) \right), \quad x > 0,\; y \in \mathbb{R}. \label{eq: laplace joint survival}
\end{align}
Accordingly, $f_Y(y) = \frac{1}{2\beta}e^{-\frac{\lvert y-\mu(0)\rvert}{\beta}}$ for $y\in \mathbb{R}$ hence $Y\sim Laplace(\mu(0), \beta)$.\\

Since $f_1'(t) 
= -\frac{1}{2}\operatorname{sgn}(t)e^{-\lvert t\rvert}
> 0$ for $t < 0$ and $-Q(t) = \operatorname{sgn}(t)$, $\sup_{t<0}(-Q(t)) = c^*= -1$. Additionally, since $f_1'(t) 
= -\frac{1}{2}\operatorname{sgn}(t)e^{-\lvert t\rvert}
< 0$ for $t > 0$ and $-Q(t) =\operatorname{sgn}(t)$, $\inf_{t>0}(-Q(t)) = s^*= 1$. Hence, applying \eqref{eq: mu'(x) bounds} yields:
\begin{align}
    -h_0(x)\beta \leq \mu'(x) \leq h_0(x)\beta . \nonumber
\end{align}
Hence, given that $s^* = 1$ and $c^* = -1$, the location functions \eqref{eq: mu+(x) weibull} and \eqref{eq: mu-(x) weibull} may be employed.

\subsubsection{Moments}
We note that since $Y\sim Laplace(\gamma, \beta)$, we have:
\begin{align}
    E(Y) &= \gamma, \label{eq: ey laplace} \\
    Var(Y) & = 2\beta^2. \label{eq: vary laplace}
\end{align}
Now using \eqref{eq: Weibull varx}, \eqref{eq: cov general} and \eqref{eq: vary laplace}, we have:
\begin{align}
    \rho(X, Y) &= \frac{Cov(X, Y)}{\sqrt{Var(X)Var(Y)}} \nonumber \\
    &= \operatorname{sgn}(\mu'(x))\frac{1}{\sqrt{2}}
\frac{\tau^{\lambda}}{\lambda}
\frac{\Gamma\left(1+\frac{1}{\lambda}\right)}
{\sqrt{\Gamma\left(1+\frac{2}{\lambda}\right)
- \Gamma^2\left(1+\frac{1}{\lambda}\right)}}. \label{eq: rho laplace}
\end{align}
Consequently, \eqref{eq: rho laplace} attains its maximum magnitude at $\tau=1$ and its minimum magnitude (zero) at $\tau=0$. Therefore,
\begin{align}
0\le \lvert \rho(X,Y) \rvert \le      \rho_{\text{max}}(\lambda):= \frac{1}{\sqrt{2}}
\frac{1}{\lambda}
\frac{\Gamma\left(1+\frac{1}{\lambda}\right)}
{\sqrt{\Gamma\left(1+\frac{2}{\lambda}\right)
- \Gamma^2\left(1+\frac{1}{\lambda}\right)}}.\nonumber
\end{align}
Denoting $\lambda^* = \operatorname*{argmax}_{\lambda>0} \rho_{\max}$, we have $\sup_{\lambda>0}\rho_{\max}(\lambda)
= \rho_{\max}(\lambda^*=1)
= \frac{1}{\sqrt{2}}$. Hence, the Weibull-Laplace model with location function $\mu^{+}(x)$ in \eqref{eq: mu+(x) weibull} 
can accommodate correlations in the range 
$\rho(X,Y)\in\left[0, \frac{1}{\sqrt{2}}\approx 0.7071 \right]$, 
whereas the specification with $\mu^{-}(x)$ in \eqref{eq: mu-(x) weibull} 
can accomodate correlations in the range 
$\rho(X,Y)\in\left[-\frac{1}{\sqrt{2}}\approx -0.7071,0 \right]$.

\subsection{Weibull-Cauchy}
We now assume the conditional survival form belongs to the Cauchy family. Hence for each $x > 0$:
\begin{align}
    \bar{F}_1\left(\frac{y - \mu(x) }{\beta}\right) &=P(Y>y\mid X>x) =  \frac{1}{2}-\frac{1}{\pi}\arctan\left(\frac{y-\mu(x)}{\beta}\right),  \quad x>0, \; y \in \mathbb{R},\nonumber
\end{align}
where $\beta> 0$ with $\mu(x) \in \mathbb{R}$ for all $x>0$ to ensure a valid survival function. The corresponding joint survival function will be:
\begin{align}
    \bar{F}_0(x)\bar{F}_1\left(\frac{y - \mu(x) }{\beta}\right) &= P(X>x,Y>y) =  e^{-(\alpha x)^\lambda}\left(  \frac{1}{2}-\frac{1}{\pi}\arctan\left(\frac{y-\mu(x)}{\beta}\right)\right), \quad x>0, \; y \in \mathbb{R}. 
    \label{eq: cauchy joint survival}
\end{align}
Accordingly, $f_Y(y) = \frac{1}{\pi\beta\left(1+\left(\frac{y-\mu(0)}{\beta}\right)^2 \right)}$ for $y\in \mathbb{R}$. Hence,  $Y\sim Cauchy(\mu(0), \beta)$.\\

Since $f_1'(t) 
= -\frac{2t}{\pi(t^2+1)^2}
> 0$ for $t < 0$
it follows that the function $-Q(t) = \frac{t^2+1}{2t}$ is increasing on $(-\infty, -1)$ and decreasing on $(-1, 0)$. Hence, $-Q(t)$ attains its maximum at $t = -1$, where $-Q(-1) = -1$. Consequently, $\sup_{t<0}(-Q(t)) = c^* = -1$. Additionally, since $f_1'(t) 
= -\frac{2t}{\pi(t^2+1)^2}
< 0$ for $t > 0$
it follows that the function $-Q(t) = \frac{t^2+1}{2t}$ is decreasing on $(0, 1)$ and increasing on $(1, \infty)$. Hence, $-Q(t)$ attains its minimum at $t = 1$, where $-Q(1) = 1$. Hence, $\inf_{t>0}(-Q(t)) = s^*= 1$. Now, applying \eqref{eq: mu'(x) bounds} yields:
\begin{align}
    -h_0(x)\beta \leq \mu'(x) \leq h_0(x)\beta . \nonumber
\end{align}
Consequently, given that $s^* = 1$ and $c^* = -1$, the location functions \eqref{eq: mu+(x) weibull} and \eqref{eq: mu-(x) weibull} may be employed.

\subsection{Weibull-Normal}
Lastly, we restrict the conditional survival form to the normal family. Hence for each $x > 0$:
\begin{align}
    \bar{F}_1\left(\frac{y-\mu(x)}{\beta} \right) &= P(Y>y\mid X>x) =  \frac{1}{2}\left(1 - \operatorname{erf}\left( \frac{y - \mu(x)}{\beta\sqrt{2}} \right) \right), \quad x>0, \; y \in \mathbb{R}, \nonumber
\end{align}
where $\beta > 0$ and $\mu(x) \in \mathbb{R}$ for all $x>0$ to ensure a valid survival function. The corresponding joint survival function will be:
\begin{align}
    \bar{F}_0(x)   \bar{F}_1\left(\frac{y-\mu(x)}{\beta} \right)&= P(X>x,Y>y) = \frac{e^{-(\alpha x)^\lambda}}{2}\left(1 - \operatorname{erf}\left( \frac{y - \mu(x)}{\beta\sqrt{2}} \right) \right), \quad x>0, \; y \in \mathbb{R}. \nonumber
\end{align}
Accordingly, $f_Y(y) = \frac{e^{-\frac{(y-\mu(0))^2}{2\beta^2}}}{\sqrt{2\pi\beta^2}}$ for $y \in \mathbb{R}$. Hence $Y\sim Normal(\mu(0), \beta)$.\\

Since $f_1'(t) 
= -\frac{te^{-\frac{t^2}{2}}}{\sqrt{2\pi}} 
> 0$ for $t <0 $ and the function $-Q(t) = \frac{1}{t}$ is decreasing on this interval, it follows that $\sup_{t <0}(-Q(t)) = \lim_{t \to -\infty} (-Q(t)) =c^* =  0$. Additionally, since $f_1'(t) 
= -\frac{te^{-\frac{t^2}{2}}}{\sqrt{2\pi}} 
< 0$ for $t >0 $ and the function $-Q(t) = \frac{1}{t}$ is decreasing on this interval, it follows that $\inf_{t >0 }(-Q(t)) = \lim_{t \to \infty} (-Q(t)) = s^*= 0$. Consequently, applying \eqref{eq: mu'(x) bounds} yields:
\begin{align}
    0  \leq \mu'(x) \leq 0. \nonumber
\end{align}
Thus, the location function reduces to a constant form, $\mu(x) = \gamma$ for some $\gamma \in \mathbb{R}$, implying statistical independence between $X$ and $Y$. As a result, employing the normal model with this location function is futile for representing dependence between the variables. Consequently, the study forgoes further utilisation of the normal model.

\section{Simulation}
We utilise the Metropolis-Hastings (MH) algorithm\footnote{Presented in Appendix \ref{app: MH}} to obtain $X$ and $Y$ (where $Y$ is conditioned on $\{X>x\}$) draws by sampling from the joint density of the models considered in this study. The target distribution is given by $\pi(x,y)=f_{X,Y}(x,y)$, noting that $X\sim Weibull(\alpha, \lambda)$ and $Y \sim Family(\mu(0), \beta)$, with $f_{X,Y}(x,y)$ given in Appendix \ref{app: joint lik}. Additionally, Appendix \ref{app: joint lik} presents the log-likelihoods for the study's models - m.l.e's are obtained numerically as $\operatorname{argmax}_{\boldsymbol{\theta}} \ell (\boldsymbol{\theta})$ where $\boldsymbol{\theta} = \left[\alpha, \beta, \lambda, \gamma, \tau\right]'$. We employ $\mu^{+}(x)$ as defined in \eqref{eq: mu+(x) weibull} for this section, although $\mu^{-}(x)$, given in \eqref{eq: mu-(x) weibull}, could equivalently have been used.\\

Furthermore, we conduct simulations based on $10000$ datasets of size $n = 100, 500$ and $1000$ for fixed parameter values: $\alpha = 1, \beta =2, \lambda =3, \gamma = -4$ and $\tau = 0.5$.

\subsection{Simulation Data}
The corresponding m.m.e and m.l.e results are reported in Tables~\ref{Table: Simulations for logistic model} - \ref{Table: Simulations for cauchy model}, together with $95\%$ confidence intervals computed as $\hat{\theta} \pm Z_{\alpha/2}\text{S.E.}(\hat{\theta})$, where $\hat{\theta}$ denotes the point estimator \footnote{The reported standard errors (S.E.) are calculated as the empirical standard deviation of the parameter estimates across simulated datasets, representing the sampling variability of the estimator.}. In addition, we report Pearson correlation coefficients denoted as $\text{PC}$.\\

As expected, the standard errors of both m.m.e's and m.l.e's \footnote{m.m.e's (where applicable) were used as initial values when using R's \texttt{optim} function. Furthermore, Appendix \ref{app: mmes} provides a guide on how to obtain m.m.e's for the models (noting that moments were not provided for the Weibull-Cauchy model, we exclude m.m.e's for said model).} decrease as the sample size $n$ increases, yielding narrower $95\%$ confidence intervals. The simulation results indicate that no instability arises in the parameter estimation; however, as expected, more reliable estimation requires larger sample sizes. Furthermore, the population correlations converge to the PCs as sample size increases.\\

Figure \ref{fig: sims gif} displays the bivariate densities of the models used in the study, given the fixed parameter set.

\begin{table}[H] \centering 
	\begin{tabular}{@{\extracolsep{1pt}} ccccccccc} 
		\\[-1.8ex]\hline 
		\hline \\[-1.8ex] 
		n & Parameter & m.m.e & m.l.e & SE(m.m.e) & SE(m.l.e) & $95\%$CI (m.m.e) & $95\%$CI (m.l.e) & PC \\ 
		\hline \\[-1.8ex] 
	100 & $\alpha$ & 0.999 & 0.999 & 0.038 & 0.038 & $(0.925,\,1.073)$ & $(0.925,\,1.073)$ & 0.101 \\ 
	     & $\beta$ & 1.977 & 1.989 & 0.186 & 0.177 & $(1.612,\,2.342)$ & $(1.642,\,2.335)$ &  \\ 
	     & $\lambda$ & 3.024 & 3.031 & 0.257 & 0.254 & $(2.520,\,3.527)$ & $(2.533,\,3.529)$ &  \\ 
	     & $\gamma$ & -3.990 & -3.991 & 0.389 & 0.372 & $(-4.752,\,-3.227)$ & $(-4.721,\,-3.261)$ &  \\ 
	     & $\tau$ & 0.587 & 0.577 & 0.157 & 0.189 & $(0.280,\,0.894)$ & $(0.206,\,0.948)$ &  \\ 
	     & $\rho$ & 0.101 & 0.096 & 0.078 & 0.077 & $(-0.052,\,0.254)$ & $(-0.056,\,0.247)$ &  \\[1ex]
\hline 
	500 & $\alpha$ & 1.000 & 1.000 & 0.017 & 0.017 & $(0.967,\,1.033)$ & $(0.967,\,1.033)$ & 0.068 \\ 
	     & $\beta$ & 1.997 & 2.000 & 0.084 & 0.079 & $(1.832,\,2.161)$ & $(1.845,\,2.155)$ &  \\ 
	     & $\lambda$ & 3.004 & 3.006 & 0.113 & 0.112 & $(2.783,\,3.225)$ & $(2.787,\,3.225)$ &  \\ 
	     & $\gamma$ & -3.999 & -3.999 & 0.173 & 0.165 & $(-4.338,\,-3.661)$ & $(-4.321,\,-3.676)$ &  \\ 
	     & $\tau$ & 0.514 & 0.512 & 0.115 & 0.122 & $(0.287,\,0.740)$ & $(0.272,\,0.752)$ &  \\ 
	     & $\rho$ & 0.068 & 0.067 & 0.041 & 0.039 & $(-0.013,\,0.149)$ & $(-0.010,\,0.145)$ &  \\[1ex]
\hline
	1000 & $\alpha$ & 1.000 & 1.000 & 0.012 & 0.012 & $(0.977,\,1.023)$ & $(0.977,\,1.024)$ & 0.064  \\ 
	     & $\beta$ & 1.998 & 2.000 & 0.060 & 0.056 & $(1.881,\,2.115)$ & $(1.891,\,2.109)$ &  \\ 
	     & $\lambda$ & 3.003 & 3.004 & 0.080 & 0.080 & $(2.846,\,3.161)$ & $(2.847,\,3.161)$ &  \\ 
	     & $\gamma$ & -3.997 & -3.998 & 0.122 & 0.116 & $(-4.236,\,-3.758)$ & $(-4.225,\,-3.771)$ &  \\ 
	     & $\tau$ & 0.503 & 0.503 & 0.096 & 0.094 & $(0.316,\,0.691)$ & $(0.318,\,0.688)$ &  \\ 
	     & $\rho$ & 0.064 & 0.064 & 0.032 & 0.030 & $(0.001,\,0.128)$ & $(0.004,\,0.124)$ &  \\[1ex]
		\hline
	\end{tabular} 
    \caption{Simulations for Weibull-Logistic model ($\alpha = 1$, $\beta =2$, $\lambda =3$, $\gamma = -4$ and $\tau = 0.5$ with $\rho = 0.063$).} 
    \label{Table: Simulations for logistic model} 
\end{table} 

\begin{table}[H] \centering 
	\begin{tabular}{@{\extracolsep{5pt}} ccccccccc} 
		\\[-1.8ex]\hline 
		\hline \\[-1.8ex] 
		n & Parameter & m.m.e & m.l.e & SE(m.m.e) & SE(m.l.e) & $95\%$CI (m.m.e) & $95\%$CI (m.l.e) & PC \\ 
		\hline \\[-1.8ex] 
	100 & $\alpha$ & 0.999 & 1.000 & 0.039 & 0.039 & $(0.923,\,1.077)$ & $(0.922,\,1.077)$ & 0.116  \\ 
	     & $\beta$ & 1.964 & 1.983 & 0.219 & 0.174 & $(1.536,\,2.393)$ & $(1.642,\,2.325)$ &  \\ 
	     & $\lambda$ & 3.029 & 3.034 & 0.265 & 0.263 & $(2.510,\,3.548)$ & $(2.519,\,3.550)$ &  \\ 
	     & $\gamma$ & -3.987 & -3.996 & 0.239 & 0.235 & $(-4.455,\,-3.519)$ & $(-4.457,\,-3.535)$ &  \\ 
	     & $\tau$ & 0.549 & 0.541 & 0.141 & 0.144 & $(0.272,\,0.826)$ & $(0.260,\,0.823)$ &  \\ 
	     & $\rho$ & 0.116 & 0.112 & 0.085 & 0.072 & $(-0.051,\,0.283)$ & $(-0.029,\,0.254)$ &  \\[1ex]
    \hline
	500 & $\alpha$ & 1.000 & 1.000 & 0.018 & 0.018 & $(0.966,\,1.035)$ & $(0.966,\,1.035)$ & 0.090  \\ 
	     & $\beta$ & 1.994 & 1.997 & 0.101 & 0.077 & $(1.796,\,2.192)$ & $(1.847,\,2.148)$ &  \\ 
	     & $\lambda$ & 3.007 & 3.009 & 0.116 & 0.116 & $(2.779,\,3.235)$ & $(2.782,\,3.237)$ &  \\ 
	     & $\gamma$ & -3.997 & -3.999 & 0.106 & 0.105 & $(-4.205,\,-3.789)$ & $(-4.203,\,-3.794)$ &  \\ 
	     & $\tau$ & 0.503 & 0.505 & 0.098 & 0.070 & $(0.311,\,0.695)$ & $(0.368,\,0.642)$ &  \\ 
	     & $\rho$ & 0.090 & 0.091 & 0.047 & 0.035 & $(-0.002,\,0.182)$ & $(0.022,\,0.160)$ &  \\[1ex]
\hline 
	1000 & $\alpha$ & 1.000 & 1.000 & 0.012 & 0.012 & $(0.976,\,1.025)$ & $(0.976,\,1.025)$ & 0.089 \\ 
	     & $\beta$ & 1.996 & 1.998 & 0.072 & 0.055 & $(1.856,\,2.137)$ & $(1.891,\,2.105)$ &  \\ 
	     & $\lambda$ & 3.004 & 3.004 & 0.083 & 0.083 & $(2.841,\,3.166)$ & $(2.841,\,3.167)$ &  \\ 
	     & $\gamma$ & -3.998 & -3.999 & 0.075 & 0.074 & $(-4.145,\,-3.852)$ & $(-4.143,\,-3.854)$ &  \\ 
	     & $\tau$ & 0.500 & 0.501 & 0.075 & 0.049 & $(0.352,\,0.647)$ & $(0.405,\,0.598)$ &  \\ 
	     & $\rho$ & 0.089 & 0.090 & 0.036 & 0.026 & $(0.019,\,0.159)$ & $(0.039,\,0.140)$ &  \\[1ex]
\hline
	\end{tabular} 
    \caption{Simulations for Weibull-Gumbel model ($\alpha = 1$, $\beta =2$, $\lambda =3$, $\gamma = -4$ and $\tau = 0.5$ with $\rho = 0.089$).} 
    \label{Table: Simulations for gumbel model} 
\end{table} 

\begin{table}[H] \centering 
	\begin{tabular}{@{\extracolsep{5pt}} ccccccccc} 
		\\[-1.8ex]\hline 
		\hline \\[-1.8ex] 
		n & Parameter & m.m.e & m.l.e & SE(m.m.e) & SE(m.l.e) & $95\%$CI (m.m.e) & $95\%$CI (m.l.e) & PC \\ 
		\hline \\[-1.8ex] 
	100 & $\alpha$ & 0.999 & 0.997 & 0.040 & 0.042 & $(0.921,\,1.078)$ & $(0.916,\,1.079)$ & 0.112 \\ 
	     & $\beta$ & 1.963 & 1.971 & 0.235 & 0.222 & $(1.502,\,2.423)$ & $(1.535,\,2.406)$ &  \\ 
	     & $\lambda$ & 3.029 & 3.051 & 0.269 & 0.274 & $(2.502,\,3.556)$ & $(2.513,\,3.588)$ &  \\ 
	     & $\gamma$ & -3.988 & -4.062 & 0.307 & 0.259 & $(-4.591,\,-3.385)$ & $(-4.570,\,-3.555)$ &  \\ 
	     & $\tau$ & 0.561 & 0.526 & 0.144 & 0.159 & $(0.279,\,0.844)$ & $(0.214,\,0.839)$ &  \\ 
	     & $\rho$ & 0.112 & 0.092 & 0.083 & 0.074 & $(-0.050,\,0.274)$ & $(-0.053,\,0.236)$ &  \\[1ex]
    \hline
	500 & $\alpha$ & 1.000 & 0.999 & 0.018 & 0.019 & $(0.965,\,1.036)$ & $(0.962,\,1.036)$ &  0.083\\ 
	     & $\beta$ & 1.994 & 1.996 & 0.107 & 0.101 & $(1.785,\,2.203)$ & $(1.798,\,2.194)$ &  \\ 
	     & $\lambda$ & 3.008 & 3.016 & 0.119 & 0.122 & $(2.776,\,3.241)$ & $(2.778,\,3.254)$ &  \\ 
	     & $\gamma$ & -3.997 & -4.022 & 0.137 & 0.119 & $(-4.267,\,-3.728)$ & $(-4.256,\,-3.788)$ &  \\ 
	     & $\tau$ & 0.506 & 0.492 & 0.102 & 0.093 & $(0.306,\,0.706)$ & $(0.310,\,0.675)$ &  \\ 
	     & $\rho$ & 0.083 & 0.077 & 0.045 & 0.040 & $(-0.005,\,0.171)$ & $(-0.001,\,0.155)$ &  \\[1ex]
    \hline
	1000 & $\alpha$ & 1.000 & 1.000 & 0.012 & 0.012 & $(0.976,\,1.025)$ & $(0.976,\,1.025)$ & 0.089 \\ 
	     & $\beta$ & 1.996 & 1.998 & 0.072 & 0.055 & $(1.856,\,2.137)$ & $(1.891,\,2.105)$ &  \\ 
	     & $\lambda$ & 3.004 & 3.004 & 0.083 & 0.083 & $(2.841,\,3.166)$ & $(2.841,\,3.167)$ &  \\ 
	     & $\gamma$ & -3.998 & -3.999 & 0.075 & 0.074 & $(-4.145,\,-3.852)$ & $(-4.143,\,-3.854)$ &  \\ 
	     & $\tau$ & 0.500 & 0.501 & 0.075 & 0.049 & $(0.352,\,0.647)$ & $(0.405,\,0.598)$ &  \\ 
	     & $\rho$ & 0.089 & 0.090 & 0.036 & 0.026 & $(0.019,\,0.159)$ & $(0.039,\,0.140)$ &  \\[1ex]
         \hline
	\end{tabular} 
    \caption{Simulations for Weibull-Laplace model ($\alpha = 1$, $\beta =2$, $\lambda =3$, $\gamma = -4$ and $\tau = 0.5$ with $\rho =  0.081$).} 
    \label{Table: Simulations for laplace model} 
\end{table} 

\begin{table}[H] \centering 
	\begin{tabular}{@{\extracolsep{5pt}} cccccc} 
		\\[-1.8ex]\hline 
		\hline \\[-1.8ex] 
		n & Parameter & m.l.e  & SE(m.l.e) & $95\%$CI (m.l.e) & PC \\ 
		\hline \\[-1.8ex] 
	100 & $\alpha$ & 1.000 & 0.037 & $(0.927,\,1.073)$ &  0.076 \\ 
	     & $\beta$ & 1.969 & 0.288 & $(1.404,\,2.535)$ &  \\ 
	     & $\lambda$ & 3.034 & 0.247 & $(2.550,\,3.517)$ &  \\ 
	     & $\gamma$ & -4.008 & 0.296 & $(-4.587,\,-3.428)$ &  \\ 
	     & $\tau$ & 0.504 & 0.256 & $(0.002,\,1.005)$ &  \\[1ex]
    \hline
	500 & $\alpha$ & 1.000 & 0.017 & $(0.967,\,1.033)$ & 0.033 \\ 
	     & $\beta$ & 1.991 & 0.128 & $(1.741,\,2.242)$ &  \\ 
	     & $\lambda$ & 3.008 & 0.112 & $(2.788,\,3.227)$ &  \\ 
	     & $\gamma$ & -4.002 & 0.131 & $(-4.259,\,-3.745)$ &  \\ 
	     & $\tau$ & 0.482 & 0.208 & $(0.076,\,0.889)$ &  \\[1ex]
\hline
	1000 & $\alpha$ & 1.000 & 0.012 & $(0.977,\,1.024)$ & 0.024 \\ 
	     & $\beta$ & 1.996 & 0.090 & $(1.820,\,2.172)$ &  \\ 
	     & $\lambda$ & 3.004 & 0.080 & $(2.847,\,3.160)$ &  \\ 
	     & $\gamma$ & -4.002 & 0.091 & $(-4.180,\,-3.823)$ &  \\ 
	     & $\tau$ & 0.479 & 0.216 & $(0.056,\,0.901)$ &  \\[1ex]
\hline
	\end{tabular} 
    \caption{Simulations for Weibull-Cauchy model ($\alpha = 1$, $\beta =2$, $\lambda =3$, $\gamma = -4$ and $\tau = 0.5$).} 
    \label{Table: Simulations for cauchy model} 
\end{table} 

\begin{figure}[H]
  \centering
  \begin{minipage}[t]{0.48\textwidth}
    \centering
\animategraphics[controls,autoplay,loop,width=\textwidth]{1}{Sims/pics/}{1}{4}
  \end{minipage}
  \hfill
  \begin{minipage}[t]{0.48\textwidth}
\centering    \animategraphics[controls,autoplay,loop,width=\textwidth]{1}{Sims/2d_pics/}{1}{4}
  \end{minipage}

  \caption{Bivariate densities of the models with parameter values: $\alpha = 1$, $\beta = 2$, $\lambda = 3$, $\gamma = -4$, and $\tau = 0.5$. The left panel displays the three-dimensional bivariate density surfaces, while the right panel presents the corresponding contour plots (an arbitrary dataset of $n = 500$ used; data in black circles).}
  \label{fig: sims gif}
\end{figure}

\section{A Small Application}
The dataset employs the logarithm of the Volatility Index (VIX) as the independent variable $X$, and the logarithm of daily returns of Nasdaq-100 E-mini futures (NQ) as the dependent variable $Y$ conditioned on $\{X>x\}$, covering the period from 1 January 2020 to 17 October 2025 \footnote{Observations containing missing values were removed. These omissions primarily arose from non-trading days (weekends and public holidays) and discrepancies in trading calendars between the VIX and NQ futures contracts.}. The data was obtained from \citet{yahoo2025finance}. The Pearson correlation between $X$ and $Y$ is $-0.140$; consequently, the Weibull-Gumbel model is unsuitable for this dataset, as previously noted, such a specification cannot accommodate negative correlations.\\

The corresponding m.m.e's, m.l.e's, and AIC values for the VIX/NQ dataset is reported in Table \ref{table: vix}. It is evident from the presented AIC values that the Weibull-Laplace model provides the best fit. This outcome is not entirely unexpected: as can be observed from the histogram of the logarithm of daily Nasdaq-100 returns, in Figure~\ref{fig: application hists gif}, the empirical distribution visually resembles that of a Laplace distribution. This finding underscores the importance of conducting formal goodness-of-fit tests for the marginal variables $X$ and $Y$ prior to estimating the joint models proposed in this study. In particular, given that the modeling framework assumes $X \sim {Weibull}(\alpha, \lambda)$ and $Y \sim {Family}(\mu(0), \beta)$, it is prudent to verify that the empirical distributions of $X$ and $Y$ from the data are consistent with the respective assumed distributional families.\\

Furthermore, the Weibull parameters $\alpha$ and $\lambda$ reported in Table~\ref{table: vix} are estimated almost identically across the three models. This indicates that all three models produce nearly identical Weibull fits for the logarithm of the VIX, as illustrated in Figure~\ref{fig: application hists gif}. In retrospect, it may have been more appropriate to construct the study’s models using a gamma marginal for $X$, which could perhaps offer a better fit for the logarithm of the VIX.\\

Estimated bivariate densities for each model, along with their corresponding contour plots, using m.l.e's from Table \ref{table: vix}, are presented in Figure \ref{fig: application1 gif}.

\begin{table}[H]
    \centering
    \begin{tabular}{ccccc}
\hline
     Model & Parameter & m.m.e & m.l.e  & AIC \\
     \hline
    Weibull-Logistic  &  $\alpha$ & 0.319 & 0.318 & -6958.266 \\
    & $\beta$ & 0.009 & 0.008 & \\
    & $\lambda$ & 11.572 & 8.694 \\
    & $\gamma$ & 0.001 & 0.001 & \\
    & $\tau$ & 0.903 & 0.848  & \\
    & $\rho$ & -0.140 & -0.111 & \\[1ex]
\hline 
    Weibull-Laplace  & $\alpha$ & 0.319 & 0.318 & $\mathbf{-6997.162}$ \\
    & $\beta$ & 0.011 & 0.011 & \\
    & $\lambda$ & 11.572 & 8.688 \\
    & $\gamma$ & 0.001 & 0.001 & \\
    & $\tau$ & 0.884 & 0.749  & \\
    & $\rho$ & -0.140 & -0.048 & \\[1ex]
\hline 
    Weibull-Cauchy & $\alpha$ & - & 0.318 & -6737.427 \\
    & $\beta$ & - & 0.008 & \\
    & $\lambda$ & - & 8.732 \\
    & $\gamma$ & - & 0.001 & \\
    & $\tau$ & - & 0.800  & \\
\hline
    \end{tabular}
    \caption{m.m.e's, m.l.e's and AIC's of models for the VIX/NQ dataset.}
    \label{table: vix}
\end{table}

\begin{figure}[H]
  \centering
\animategraphics[controls,autoplay,loop,width=0.85\textwidth]{1}{Application/hists/}{1}{3}
 \caption{Histogram of the logarithm of the VIX (left) with estimated marginal Weibull densities based on m.l.e's reported in Table~\ref{table: vix}, and histogram of the logarithm of daily Nasdaq-100 (NQ) returns (right) with estimated marginal logistic, Laplace, and Cauchy densities, also based on the corresponding m.l.e's in Table~\ref{table: vix}.}
  \label{fig: application hists gif}
\end{figure}

\begin{figure}[H]
  \centering
  \begin{minipage}[t]{0.48\textwidth}
    \centering
\animategraphics[controls,autoplay,loop,width=\textwidth]{1}{Application/pics/}{1}{3}
  \end{minipage}
  \hfill
  \begin{minipage}[t]{0.48\textwidth}
\centering    \animategraphics[controls,autoplay,loop,width=\textwidth]{1}{Application/2d_pics/}{1}{3}
  \end{minipage}
  \caption{Estimated bivariate densities of the models using m.l.e's from Table \ref{table: vix}. The left panel displays the estimated 3-D bivariate density surfaces, while the right panel presents the corresponding contour plots (observations in black circles).}
  \label{fig: application1 gif}
\end{figure}

\appendix
\section*{Appendices}
\section{General Theorems} \label{app: gen theorems}
\begin{theorem} \label{thm: cov(x, y) symmetric}
Consider an accelerated conditionals model of the form: 
\[
P(X>x) = \bar{F}_0(x)
\quad \text{and} \quad
P(Y>y \mid X>x) = \bar{F}_1\left(\frac{y-\mu}{\beta(x)}\right),
\]
where $\bar{F}_0$ and $\bar{F}_1$ are survival functions. If the marginal of $Y$ is symmetric about $\mu$, then ${Cov}(X,Y) = 0$.
\end{theorem}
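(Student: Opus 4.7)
The plan is to use the Hoeffding covariance identity together with the shift identity already invoked elsewhere in the paper (Theorem~\ref{thm: shift identitiy}). Writing the joint survival as $\bar F(x,y) = \bar F_0(x)\bar F_1((y-\mu)/\beta(x))$ and letting the $Y$-marginal be $\bar F_Y(y) = \bar F_1((y-\mu)/\beta(0))$, Hoeffding's formula gives
\begin{align}
\operatorname{Cov}(X,Y)
= \int_{0}^{\infty}\int_{-\infty}^{\infty}\!\Big[\bar F_0(x)\bar F_1\!\Big(\tfrac{y-\mu}{\beta(x)}\Big)-\bar F_0(x)\bar F_1\!\Big(\tfrac{y-\mu}{\beta(0)}\Big)\Big]\,dy\,dx
= \int_{0}^{\infty}\bar F_0(x)\,J(x)\,dx, \nonumber
\end{align}
where $J(x):=\int_{-\infty}^{\infty}\big[\bar F_1((y-\mu)/\beta(x))-\bar F_1((y-\mu)/\beta(0))\big]\,dy$. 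The objective then reduces to showing $J(x)=0$ for every $x>0$.

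The first key step is to translate the symmetry hypothesis into a statement about $\bar F_1$. The marginal density $f_Y(y)=(1/\beta(0))\,f_1((y-\mu)/\beta(0))$ being symmetric about $\mu$ forces $f_1$ to be symmetric about $0$. But this symmetry is a property of the baseline $f_1$ alone and does not depend on the scale; consequently, for every $x>0$ the conditional density $(1/\beta(x))\,f_1((y-\mu)/\beta(x))$ is also symmetric about $\mu$. Hence both survival functions appearing in $J(x)$ correspond to distributions on $\mathbb{R}$ that are symmetric about $\mu$ and therefore have mean $\mu$ (assuming existence of the mean, which is required for $\operatorname{Cov}(X,Y)$ to be defined in the first place).

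Next I would apply the shift identity (Theorem~\ref{thm: shift identitiy}): for two distributions with survival functions $\bar G_1,\bar G_2$ and finite means $m_1,m_2$, one has $\int_{-\infty}^{\infty}[\bar G_1(y)-\bar G_2(y)]\,dy = m_1 - m_2$. Applied with $\bar G_1(y)=\bar F_1((y-\mu)/\beta(x))$ and $\bar G_2(y)=\bar F_1((y-\mu)/\beta(0))$, both means equal $\mu$, so $J(x)=\mu-\mu=0$. Substituting back into the Hoeffding integral yields $\operatorname{Cov}(X,Y)=0$.

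The only genuine subtlety is an integrability check: neither of the two survival functions making up $J(x)$ is itself integrable over $\mathbb{R}$, so the cancellation under the integral sign must be justified rather than split. This is exactly what the shift identity handles, provided $f_1$ has a first moment; I would note this hypothesis explicitly (it is implicit in assuming $\operatorname{Cov}(X,Y)$ exists) and otherwise treat the argument as routine once $J(x)=0$ is established.
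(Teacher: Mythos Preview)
Your argument is correct but takes a different route from the paper. Both proofs start from the Hoeffding representation and reduce to showing that the inner integral $J(x)$ vanishes for each $x>0$. The paper does this by a direct symmetry calculation: it splits the inner integral at $y=\mu$, uses the relation $\bar F_1(-u)=1-\bar F_1(u)$ coming from symmetry of $f_1$ about $0$, and checks via the reflection $y\mapsto 2\mu-y$ that the lower half equals the negative of the upper half, so the two pieces cancel. You instead argue that symmetry of $f_1$ about $0$ forces every scaled copy $(1/\beta(x))f_1((y-\mu)/\beta(x))$ to have mean $\mu$, and then invoke the general tail identity $\int_{\mathbb R}[\bar G_1-\bar G_2]\,dy=m_1-m_2$ to conclude $J(x)=\mu-\mu=0$.

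One caution on the citation: Theorem~\ref{thm: shift identitiy} as stated is not the identity you actually use. It compares two copies of the \emph{same} baseline at the \emph{same} scale $\beta$ with different locations $\mu,\gamma$; here you have the same location but different scales $\beta(x),\beta(0)$. The general statement you need (difference of survival tails integrates to the difference of means, under a first-moment assumption) is true and standard, but it requires its own one-line justification rather than a direct appeal to Theorem~\ref{thm: shift identitiy}. The paper's reflection argument sidesteps this by never mentioning means at all; your route is conceptually cleaner once the general identity is on record, and you are right that the first-moment hypothesis is already implicit in assuming $\operatorname{Cov}(X,Y)$ exists.
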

\begin{proof}
Since $P(Y>y , X> x) = \bar{F}_0(x)\bar{F}_1\left(\frac{y-\mu}{\beta(x)} \right)$ then $P(Y>y) = \lim_{x\to 0^+}P(Y>y , X> x) = \bar{F}_0(0)\bar{F}_1\left(\frac{y-\mu}{\beta(0)} \right) = \bar{F}_1\left(\frac{y-\mu}{\beta(0)}\right)$ since $\lim_{x\to 0^+}\bar{F}_0(x) = \bar{F}_0(0)  =  1$ and $\beta(0): = \lim_{x\to 0^+} \beta(x)$. We have from Hoeffding's covariance identity:
\begin{align}
    {Cov}(X, Y)
    &= \int_{0}^\infty \int_{-\infty}^\infty
    \left( P(Y>y, X>x) - P(X>x)P(Y>y) \right)\, dy\, dx \nonumber \\
    &= \int_{0}^\infty \int_{-\infty}^\infty
    \bar{F}_0(x)
    \left\{
        \bar{F}_1\!\left(\frac{y-\mu}{\beta(x)}\right)
        - \bar{F}_1\!\left(\frac{y-\mu}{\beta(0)}\right)
    \right\} dy\, dx. \nonumber
\end{align}
We split the inner integral at the symmetry point $y=\mu$:
\begin{align}
    \int_{-\infty}^\infty [\cdots]\, dy
    &= \int_{-\infty}^{\mu} [\cdots]\, dy
    + \int_{\mu}^{\infty} [\cdots]\, dy. \nonumber
\end{align}
By symmetry of $\bar{F}_1$, for any $\beta(x)>0$ and all $y<\mu$,
\begin{align}
    \bar{F}_1\!\left(\frac{y-\mu}{\beta(x)}\right)
    = 1 - \bar{F}_1\!\left(\frac{\mu-y}{\beta(x)}\right). \nonumber
\end{align}
Substituting into the lower-half integral, it becomes:
\begin{align}
    \int_{-\infty}^{\mu} [\cdots]\, dy
    &= -\int_{-\infty}^{\mu}
    \left\{
        \bar{F}_1\!\left(\frac{\mu-y}{\beta(x)}\right)
        - \bar{F}_1\!\left(\frac{\mu-y}{\beta(0)}\right)
    \right\} dy. \nonumber
\end{align}
Changing the integration limits by the reflection identity:
\[
\int_{-\infty}^{\mu} \phi(\mu - y)\, dy
= \int_{\mu}^{\infty} \phi(y-\mu)\, dy,
\]
we rewrite the lower-half integral as:
\begin{align}
    \int_{-\infty}^{\mu} [\cdots]\, dy
    &= -\int_{\mu}^{\infty}
    \left\{
        \bar{F}_1\!\left(\frac{y-\mu}{\beta(x)}\right)
        - \bar{F}_1\!\left(\frac{y-\mu}{\beta(0)}\right)
    \right\} dy. \nonumber
\end{align}
Hence, for each fixed $x$,
\[
\int_{-\infty}^\infty [\cdots]\, dy
= \int_{-\infty}^{\mu} [\cdots]\, dy
\ + \int_{\mu}^{\infty} [\cdots]\, dy
= -\int_{\mu}^{\infty}
    \left\{
        \bar{F}_1\!\left(\frac{y-\mu}{\beta(x)}\right)
        - \bar{F}_1\!\left(\frac{y-\mu}{\beta(0)}\right)
    \right\} dy \ + \int_{\mu}^{\infty}
    \left\{
        \bar{F}_1\!\left(\frac{y-\mu}{\beta(x)}\right)
        - \bar{F}_1\!\left(\frac{y-\mu}{\beta(0)}\right)
    \right\} dy = 0.
\]
Hence, ${Cov}(X, Y)
= \int_{0}^\infty \bar{F}_0(x) \cdot 0\, dx
= 0.$
\end{proof}

\begin{theorem} \label{thm: cov(x, y)}
Consider an accelerated conditionals model of the form: 
\[
P(X>x) = \bar{F}_0(x)
\quad \text{and} \quad
P(Y>y \mid X>x) = \bar{F}_1\left(\frac{y-\mu(x)}{\beta}\right), \quad x>0,\; y \in \mathbb{R},
\]
where $\bar{F}_0$ and $\bar{F}_1$ are survival functions and $\beta > 0$ and $\mu(x) \in \mathbb{R}$. Then:
\[
\mu'(x) \le 0 \quad \Longrightarrow \quad {Cov}(X,Y) \le 0,
\qquad
\mu'(x) \ge 0 \quad \Longrightarrow \quad {Cov}(X,Y) \ge 0.
\]
\end{theorem}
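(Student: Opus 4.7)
The plan is to mirror the strategy used in Theorem \ref{thm: cov(x, y) symmetric}: apply Hoeffding's covariance identity to rewrite $\operatorname{Cov}(X,Y)$ as a double integral in the survival functions, collapse the inner $y$-integral to a simple difference of locations, and then read off the sign from the monotonicity of $\mu$.

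First I would observe that, since $P(X>x,Y>y)=\bar F_0(x)\,\bar F_1\!\left(\tfrac{y-\mu(x)}{\beta}\right)$ and $\lim_{x\to 0^+}\bar F_0(x)=1$, the marginal of $Y$ is $P(Y>y)=\bar F_1\!\left(\tfrac{y-\mu(0)}{\beta}\right)$, where $\mu(0):=\lim_{x\to 0^+}\mu(x)$. Hoeffding's identity then gives
\begin{align}
\operatorname{Cov}(X,Y)
&=\int_{0}^{\infty}\!\!\int_{-\infty}^{\infty}\bar F_0(x)
\Big[\bar F_1\!\Big(\tfrac{y-\mu(x)}{\beta}\Big)-\bar F_1\!\Big(\tfrac{y-\mu(0)}{\beta}\Big)\Big]\,dy\,dx. \nonumber
\end{align}

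Next I would invoke the shift identity (Theorem \ref{thm: shift identitiy} in Appendix \ref{app: gen theorems}), which, provided the associated distribution has a finite mean, states that for any real $a,b$,
\[
\int_{-\infty}^{\infty}\Big[\bar F_1\!\Big(\tfrac{y-a}{\beta}\Big)-\bar F_1\!\Big(\tfrac{y-b}{\beta}\Big)\Big]dy = a-b.
\]
Applying this to the inner integral with $a=\mu(x)$ and $b=\mu(0)$ yields the reduction
\[
\operatorname{Cov}(X,Y)=\int_{0}^{\infty}\bar F_0(x)\,[\mu(x)-\mu(0)]\,dx,
\]
which is already the form invoked in the derivation of \eqref{eq: cov general} in the main text.

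Finally, the sign argument is immediate from monotonicity. If $\mu'(x)\ge 0$ on $(0,\infty)$, then $\mu$ is non-decreasing and $\mu(x)\ge\mu(0)$ for all $x>0$, so the integrand is non-negative and hence $\operatorname{Cov}(X,Y)\ge 0$; the case $\mu'(x)\le 0$ is symmetric. The only genuine subtlety, and the main point one must be careful about, is the applicability of the shift identity: it requires $\bar F_1$ to correspond to a distribution with finite mean, so the result is understood within that regularity class (which covers all the location-scale families developed in the paper except the Cauchy, for which $\operatorname{Cov}(X,Y)$ is undefined anyway).
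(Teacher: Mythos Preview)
Your argument is correct, but it takes a somewhat different route from the paper's own proof. Both start from Hoeffding's identity and arrive at the double integral
\[
\operatorname{Cov}(X,Y)=\int_0^\infty \bar F_0(x)\int_{-\infty}^\infty\Big[\bar F_1\!\Big(\tfrac{y-\mu(x)}{\beta}\Big)-\bar F_1\!\Big(\tfrac{y-\mu(0)}{\beta}\Big)\Big]dy\,dx,
\]
but from there the paper does not invoke Theorem~\ref{thm: shift identitiy} at all. Instead it argues pointwise: since $\bar F_1$ is non-increasing, $\mu'(x)\ge 0$ forces $\tfrac{y-\mu(x)}{\beta}\le \tfrac{y-\mu(0)}{\beta}$ and hence the bracket is non-negative for every $(x,y)$, giving the sign directly; the case $\mu'(x)\le 0$ is symmetric, and constancy of $\mu$ yields independence. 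Your approach instead collapses the inner integral via the shift identity to obtain the exact expression $\operatorname{Cov}(X,Y)=\int_0^\infty \bar F_0(x)[\mu(x)-\mu(0)]\,dx$ and then reads off the sign from monotonicity of $\mu$. The payoff of your route is the explicit covariance formula (indeed the one the paper later uses at \eqref{eq: cov general}), at the cost of needing $\bar F_1$ to have a finite first moment so that Theorem~\ref{thm: shift identitiy} applies --- a restriction you correctly flag. The paper's argument is more elementary in that it only needs $\bar F_1$ to be non-increasing, though of course $\operatorname{Cov}(X,Y)$ must still exist for the statement to be meaningful.
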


\begin{proof}
Since $P(Y>y , X> x) = \bar{F}_0(x)\bar{F}_1\left(\frac{y-\mu(x)}{\beta} \right)$ then $P(Y>y) = \lim_{x\to 0^+}P(Y>y , X> x) = \bar{F}_0(0)\bar{F}_1\left(\frac{y-\mu(0)}{\beta} \right) = \bar{F}_1\left(\frac{y-\mu(0)}{\beta}\right)$ since $\lim_{x\to 0^+}\bar{F}_0(x) = \bar{F}_0(0)  =  1$ and $\mu(0): = \lim_{x\to  0^+} \mu(x)$, we note:
\begin{align}
    Cov(X, Y) & = \int_{-\infty}^\infty \int_{0}^\infty\left( P(Y>y , X> x) - P(X> x)P(Y> y)   \right) \ dx\; dy \nonumber \\
    & = \int_{-\infty}^\infty \int_{0}^\infty  \bar{F}_0(x) \left( \bar{F}_1\left(\frac{y-\mu(x)}{\beta} \right)- \bar{F}_1 \left(\frac{y-\mu(0)}{\beta} \right) \right) \ dx \;dy.\nonumber 
\end{align}
Now since $\bar{F}_1$ is non-increasing (as $\bar{F}_1$ is a survival function), if $\mu'(x) \leq 0$ then $\bar{F}_1\left(\frac{y-\mu(x)}{\beta} \right) \leq \bar{F}_1\left(\frac{y-\mu(0)}{\beta} \right)$ and $Cov(X, Y) \leq 0$ for $x>0, y \in \mathbb{R}$. Conversely, if $\mu'(x) \ge 0$ then $\bar{F}_1\left(\frac{y-\mu(x)}{\beta} \right) \geq \bar{F}_1\left(\frac{y-\mu(0)}{\beta} \right)$ and $Cov(X, Y) \geq 0$ for $x>0, y \in \mathbb{R}$. Finally, if $\mu(x)$ is constant, say $\mu(x)\equiv c$, then $P(X>x,Y>y)=\bar F_0(x)\,\bar F_1\left(\frac{y -c}{\beta}\right)=P(X>x)\,P(Y>y)$, so $X$ and $Y$ are independent and ${Cov}(X,Y)=0$.
\end{proof}

\begin{theorem}\label{thm: shift identitiy}
Let $F_1$ be any (not necessarily symmetric) distribution function on $\mathbb{R}$ with survival $\bar F_1(u)=1-F_1(u)$, and let $\beta>0$, $\mu,\gamma\in\mathbb{R}$. Then
\begin{equation}\label{eq: shift identity}
\int_{-\infty}^{\infty}\Big(\bar F_1\!\big(\tfrac{y-\mu}{\beta}\big)-\bar F_1\!\big(\tfrac{y-\gamma}{\beta}\big)\Big)\,dy
\;=\; \mu-\gamma.
\end{equation}
\end{theorem}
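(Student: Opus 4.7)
The plan is to reduce \eqref{eq: shift identity} to an elementary Lebesgue-length calculation via a Fubini-type exchange of order. The obstruction that motivates this route is that the two individual integrals $\int \bar F_1((y-\mu)/\beta)\,dy$ and $\int \bar F_1((y-\gamma)/\beta)\,dy$ each diverge (the integrand approaches $1$ as $y\to-\infty$ and is only absolutely integrable on $(0,\infty)$ when $F_1$ has a finite mean), so the difference cannot be split term-by-term.

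First I would represent each survival term as an integral against the probability measure induced by $F_1$: letting $Z$ denote a random variable with distribution $F_1$, write $\bar F_1(c)=\int \mathbf{1}\{c<z\}\,dF_1(z)$, whence the integrand in \eqref{eq: shift identity} equals $\int\bigl[\mathbf{1}\{y<\mu+\beta z\}-\mathbf{1}\{y<\gamma+\beta z\}\bigr]\,dF_1(z)$. Next I would check the hypotheses of Fubini's theorem: as a function of $y$, the bracketed difference of indicators is supported on the closed interval with endpoints $\gamma+\beta z$ and $\mu+\beta z$ and is bounded by $1$ in absolute value, so its absolute $(y,z)$-integral equals $|\mu-\gamma|<\infty$. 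After swapping the order of integration, the inner $y$-integral evaluates to the signed length $\mu-\gamma$ (a one-line case split on whether $\mu>\gamma$, $\mu=\gamma$, or $\mu<\gamma$ yields the same value), and integrating this constant against $dF_1(z)$ returns $\mu-\gamma$, establishing the identity.

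The main subtlety is the sign of $\mu-\gamma$: when $\mu<\gamma$ the two indicators swap roles so the integrand becomes negative on the opposite interval, but the signed-length computation still returns $\mu-\gamma$, so no ordering assumption on $\mu$ and $\gamma$ or symmetry assumption on $F_1$ is needed. Because Fubini is applied only to a bounded indicator against a probability measure, no moment or integrability assumption on $F_1$ is required; in particular $F_1$ need not possess a finite mean, which is precisely what allows \eqref{eq: shift identity} to be invoked in \eqref{eq: cov general} for heavy-tailed conditional families such as the Cauchy specification examined later.
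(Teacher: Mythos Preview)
Your argument is correct and takes a different route from the paper's. The paper defines $H(a):=\int_{-\infty}^{\infty}\bar F_1((y-a)/\beta)\,dy$, computes $H'(a)=1$ by differentiating under the integral sign (adding a density assumption on $F_1$), and concludes $H(\mu)-H(\gamma)=\mu-\gamma$. As you correctly flag in your opening paragraph, each such $H(a)$ is in fact a divergent integral (the integrand tends to $1$ as $y\to-\infty$), so the paper's argument is at best heuristic: it manipulates the infinite quantity $H(a)=a+C$ as though it were finite and then cancels the undefined constant. Your Fubini representation avoids this entirely by never splitting the difference: the signed indicator $\mathbf{1}\{y<\mu+\beta z\}-\mathbf{1}\{y<\gamma+\beta z\}$ has compact $y$-support of length $|\mu-\gamma|$ for each fixed $z$, so the double integral is absolutely convergent and Tonelli/Fubini applies cleanly. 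Your approach also dispenses with the density hypothesis and requires no moment condition on $F_1$, which is precisely what is needed when \eqref{eq: shift identity} is invoked for the Cauchy specification in \eqref{eq: cov general}. What the paper's route offers is an intuitive ``shift picture'' ($H$ grows at unit rate in the location parameter), but yours is the argument that actually proves the theorem as stated.
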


\begin{proof}
Assume in addition that $F_1$ has a density $f_1$ so that $\bar F_1'(u)=-f_1(u)$. Define
\[
H(a):=\int_{-\infty}^{\infty}\bar F_1\!\Big(\tfrac{y-a}{\beta}\Big)\,dy,\qquad a\in\mathbb{R}.
\]
Formally differentiating under the integral sign (justified by dominated convergence since $0\le\bar F_1\le 1$ and $f_1$ integrates to $1$), we obtain
\[
H'(a)=\int_{-\infty}^{\infty}\frac{\partial}{\partial a}\bar F_1\!\Big(\tfrac{y-a}{\beta}\Big)\,dy
=\int_{-\infty}^{\infty}\frac{1}{\beta}f_1\!\Big(\tfrac{y-a}{\beta}\Big)\,dy
=\int_{-\infty}^{\infty} f_1(t)\,dt
=1,
\]
where we used the substitution $t=(y-a)/\beta$. Hence $H(a)=a+C$ for some constant $C$. Therefore,
\[
\int_{-\infty}^{\infty}\Big(\bar F_1\!\big(\tfrac{y-\mu}{\beta}\big)-\bar F_1\!\big(\tfrac{y-\gamma}{\beta}\big)\Big)\,dy
=H(\mu)-H(\gamma)=(\mu+C)-(\gamma+C)=\mu-\gamma,
\]
which is \eqref{eq: shift identity}.
\end{proof}

\section{Joint Densities and Log-likelihoods} \label{app: joint lik}
\subsection{Weibull-Logistic} 
We obtain the joint density function by differentiating the joint survival function \eqref{eq: logistic joint survival}:
\[
f_{X,Y}(x,y)
=
\frac{
\left(
\alpha^{\lambda} \beta \lambda x^{\lambda} e^{\frac{y}{\beta}} 
+ \alpha^{\lambda} \beta \lambda x^{\lambda} e^{\frac{\mu(x)}{\beta}} 
- x e^{\frac{y}{\beta}} \frac{d}{dx}\mu(x) 
+ x e^{\frac{\mu(x)}{\beta}} \frac{d}{dx}\mu(x)
\right)
e^{-\alpha^{\lambda} x^{\lambda} + \frac{y}{\beta} + \frac{\mu(x)}{\beta}}
}{
\beta^{2} x 
\left(
e^{\frac{3y}{\beta}} 
+ 3 e^{\frac{y + 2\mu(x)}{\beta}} 
+ 3 e^{\frac{2y + \mu(x)}{\beta}} 
+ e^{\frac{3\mu(x)}{\beta}}
\right)
},
\]
where $\lambda \in \mathbb{R}$ and $\alpha, \beta, \lambda > 0$ with $\tau \in [0, 1]$. The log-likelihood is thus:
\begin{align}
\ell(\alpha,\beta,\gamma,\lambda,\tau) 
&= \sum_{i=1}^n \Bigg[
\log\!\Big(
\alpha^{\lambda}\beta\lambda x_i^{\lambda}\big(e^{y_i/\beta}+e^{\mu(x_i)/\beta}\big)
+ x_i \frac{d\mu(x_i)}{dx}\big(-e^{y_i/\beta}+e^{\mu(x_i)/\beta}\big)
\Big)
- \alpha^{\lambda}x_i^{\lambda}
+ \frac{y_i}{\beta}
+ \frac{\mu(x_i)}{\beta} \nonumber\\
&\quad
- 2\log\beta
- \log x_i
- 3\log\!\big(e^{y_i/\beta}+e^{\mu(x_i)/\beta}\big)
\Bigg].\nonumber
\end{align}

\subsection{Weibull-Gumbel} 
We obtain the joint density function by differentiating the joint survival function \eqref{eq: gumbel joint survival}:\[
f_{XY}(x,y)
= 
\frac{
\left(
\alpha^{\lambda}\beta\lambda x^{\lambda} e^{\frac{y}{\beta}}
- x e^{\frac{y}{\beta}} \frac{d\mu(x)}{dx}
+ x e^{\frac{\mu(x)}{\beta}} \frac{d\mu(x)}{dx}
\right)
e^{-\alpha^{\lambda} x^{\lambda} - e^{\frac{-y+\mu(x)}{\beta}} - \frac{2y}{\beta} + \frac{\mu(x)}{\beta}}
}{
\beta^{2} x
},
\]
where $\lambda \in \mathbb{R}$ and $\alpha, \beta, \lambda > 0$ with $\tau \in [0, 1]$. The log-likelihood is thus:
\[
\ell(\alpha,\beta,\gamma,\lambda,\tau)
= \sum_{i=1}^n \left[
\log\!\left(
\alpha^{\lambda}\beta\lambda x_i^{\lambda} e^{\frac{y_i}{\beta}}
- x_i e^{\frac{y_i}{\beta}} \frac{d\mu(x_i)}{dx}
+ x_i e^{\frac{\mu(x_i)}{\beta}} \frac{d\mu(x_i)}{dx}
\right)
- \alpha^{\lambda} x_i^{\lambda}
- e^{\frac{-y_i+\mu(x_i)}{\beta}}
- \frac{2y_i}{\beta}
+ \frac{\mu(x_i)}{\beta}
- 2\log\beta
- \log x_i
\right].
\]

\subsection{Weibull-Laplace} 
We obtain the joint density function by differentiating the joint survival function \eqref{eq: laplace joint survival}:
\[
f_{XY}(x,y)
= \frac{
\left[\alpha^{\lambda}\beta\lambda x^{\lambda}
- \operatorname{sgn}\!\left(y-\mu(x)\right)\,x\,\mu'(x)\right]
\exp\!\left(-\alpha^{\lambda}x^{\lambda}
- \frac{\lvert y-\mu(x)\rvert}{\beta}\right)
}{2\beta^{2}x},
\]
where $\lambda \in \mathbb{R}$ and $\alpha, \beta, \lambda > 0$ with $\tau \in [0, 1]$. The log-likelihood is thus:
\[
\ell(\alpha,\beta,\gamma,\lambda,\tau)
=\sum_{i=1}^n\left[
\log\!\left(\alpha^{\lambda}\beta\lambda x_i^{\lambda}-\operatorname{sgn}\!\big(y_i-\mu(x_i)\big)\,x_i\,\mu'(x_i)\right)
-\alpha^{\lambda}x_i^{\lambda}
-\frac{\lvert y_i-\mu(x_i)\rvert}{\beta}
-\log 2
-2\log\beta
-\log x_i
\right].
\]
\subsection{Weibull-Cauchy} 
We obtain the joint density function by differentiating the joint survival function \eqref{eq: cauchy joint survival}:
\[
f_{XY}(x,y)
=
\frac{
\beta\left[
\lambda (\alpha x)^{\lambda}\!\left(\beta^{2} + (y - \mu(x))^{2}\right)
- 2x (y - \mu(x)) \frac{d\mu(x)}{dx}
\right]
e^{-(\alpha x)^{\lambda}}
}{
\pi x \left(\beta^{2} + (y - \mu(x))^{2}\right)^{2}
},
\]
where $\lambda \in \mathbb{R}$ and $\alpha, \beta, \lambda > 0$ with $\tau \in [0, 1]$. The log-likelihood is thus:
\[
\ell(\alpha,\beta,\gamma,\lambda,\tau)
=\sum_{i=1}^n \Bigg[
\log\!\Big(
\beta\!\left[
\lambda (\alpha x_i)^{\lambda}\!\left(\beta^{2} + (y_i - \mu(x_i))^{2}\right)
- 2x_i (y_i - \mu(x_i)) \frac{d\mu(x_i)}{dx}
\right]\!\Big) 
- (\alpha x_i)^{\lambda}
- \log\pi
- \log x_i
- 2\log\!\big(\beta^{2} + (y_i - \mu(x_i))^{2}\big)
\Bigg].
\]

\section{Metropolis-Hastings}\label{app: MH}
We detail the Metropolis-Hastings algorithm used in the study, to generate dependent pairs $(X,Y)$, where $f_{X, Y}$ can be obtained in Appendix \ref{app: joint lik}.\\

\noindent
\textbf{Step 1 (Initialization):} Choose any starting point \((x^{(0)},y^{(0)})\) in the support of \(f_{X,Y}\). Set \(j=0\). \\

\noindent
\textbf{Step 2 (Proposal):} Given the current state \((x^{(j)},y^{(j)})\), draw a candidate
\[
(x^*,y^*) \sim q(\,\cdot\,\mid x^{(j)},y^{(j)}).
\]
The study utilises an independence proposal:
\[
q(x^*,y^*\mid x^{(j)},y^{(j)}) = q(x^*,y^*)=f_X(x^*)\,f_Y(y^*).
\] \\
\noindent
\textbf{Step 3 (Acceptance ratio $\alpha_a$):} Compute:
\begin{align}
\alpha_a &= \min\left(
\frac{f_{X,Y}(x^*,y^*)\,q(x^{(j)},y^{(j)}\mid x^*,y^*)}{f_{X,Y}(x^{(j)},y^{(j)})\,q(x^*,y^*\mid x^{(j)},y^{(j)})}, 1 \right) \nonumber \\
& = \min \left(\frac{f_{X,Y}(x^*,y^*)}{f_X(x^*)\,f_Y(y^*)}
\bigg/
\frac{f_{X,Y}(x^{(j)},y^{(j)})}{f_X(x^{(j)})\,f_Y(y^{(j)})}, 1\right). \nonumber
\end{align}
Equivalently, using logs for numerical stability,
\begin{align}
\log(\alpha_a) 
&= \min\Bigg\{ 
\Big[\,\log\!\big(f_{X,Y}(x^*, y^*)\big)
      - \log\!\big(f_X(x^*)\big)
      - \log\!\big(f_Y(y^*)\big)\,\Big] \nonumber \\
&\quad\quad -
\Big[\,\log\!\big(f_{X,Y}(x^{(j)}, y^{(j)})\big)
      - \log\!\big(f_X(x^{(j)})\big)
      - \log\!\big(f_Y(y^{(j)})\big)\,\Big],
\, 0 \Bigg\}. \nonumber
\end{align}
Subsequently, $(x^*, y^*)$ is accepted as the new pair in the Markov chain under the following acceptance criterion:
\begin{align}
(x^{(j+1)}, y^{(j+1)}) =
\begin{cases}
(x^*, y^*), & \text{if } U < \alpha_a, \\
(x^{(j)}, y^{(j)}), & \text{otherwise.} 
\end{cases}\nonumber
\end{align}
where $U\sim Uniform(0, 1)$.

\section{Method of Moment Estimators} \label{app: mmes}
Suppose data in the form of $\textbf{X}^{(1)}, \textbf{X}^{(2)}, ..., \textbf{X}^{(n)}$ (where for each $i = 1, \ldots, n$, $\textbf{X}^{(i)} = (X_{1i}, X_{2i})^{T} )$ are obtained, which are independent and identically distributed according to \ref{eq: general marginal} and \ref{eq: general cond}. If $M_{1}, M_2 > 0$, consistent asymptotically normal method of moment estimates (m.m.e's) are acquired. Define:
\begin{eqnarray}
	M_{1} &=& \frac{1}{n}\sum_{i = 1}^{n} x_{1i}, \nonumber\\
	M_{2} &=& \frac{1}{n}\sum_{i = 1}^{n} x_{2i}, \nonumber\\
	S_{12}&=& \frac{1}{n}\sum_{i = 1}^{n}(x_{1i} - M_{1} ) ( x_{2i} - M_{2} ), \nonumber \\
    S_{1} &=&  \frac{1}{n}\sum_{i = 1}^{n}(x_{2i} - M_{2} )^2, \nonumber\\
	S_{2} &=&  \frac{1}{n}\sum_{i = 1}^{n}(x_{2i} - M_{2} )^2. \nonumber
\end{eqnarray}
We denote the method of moment estimators for $\alpha, \beta, \lambda, \gamma$ and $\tau$ as $\tilde{\alpha}, \tilde{\beta}, \tilde{\lambda},  \tilde{\gamma}$ and $\tilde{\tau}$ respectively. \\

Furthermore, $\tilde{\alpha}$ and $\tilde{\lambda}$ may be solved for numerically by equating \eqref{eq: Weibull ex} to $M_1$ and \eqref{eq: Weibull varx} to $S_1$. After solving for $\tilde{\alpha}$, $\tilde{\lambda}$ and $\tilde{\beta}$, use \eqref{eq: cov general} to obtain:
\begin{align}
    \tilde{\tau} = \left(\operatorname{sgn}\left(\mu'(x) \right)\frac{\tilde{\alpha}\tilde{\lambda} S_{12}}{\tilde{\beta}\Gamma\left(1+\frac{1}{\tilde{\lambda} } \right)}\right)^{1/\tilde{\lambda}}. \nonumber
\end{align}
\subsection{Weibull-Logistic}
Using \eqref{eq: ey logistic} and \eqref{eq: vary logistic}, we obtain:
\begin{align}
    \tilde{\gamma} &= {M_2}, \nonumber\\
    \tilde{\beta} & = \sqrt{3S_2/\pi^2}. \nonumber
\end{align}
\subsection{Weibull-Gumbel}
Using \eqref{eq: ey gumbel} and \eqref{eq: vary gumbel}, we obtain:
\begin{align}
    \tilde{\beta} & = \sqrt{6S_2/\pi^2}, \nonumber\\
    \tilde{\gamma} &= M_2- \tilde{\beta}\gamma^* . \nonumber
\end{align}
\subsection{Weibull-Laplace}
Using \eqref{eq: ey laplace} and \eqref{eq: vary laplace}, we obtain:
\begin{align}
    \tilde{\gamma} &= {M_2}, \nonumber\\
    \tilde{\beta} & = \sqrt{S_2/2}. \nonumber
\end{align}

\bibliographystyle{unsrtnat}
\bibliography{main}

@misc{lakhani2025modelsacceleratedfailureconditionals,
      title={Models with Accelerated Failure Conditionals}, 
      author={Jared N. Lakhani},
      year={2025},
      eprint={2511.15769},
      archivePrefix={arXiv},
      primaryClass={stat.ME},
      url={https://arxiv.org/abs/2511.15769}, 
}

@article{arnold2020bivariate,
  title={On bivariate pseudo-exponential distributions},
  author={Arnold, Barry C and Arvanitis, Matthew A},
  journal={Journal of Applied Statistics},
  volume={47},
  number={13-15},
  pages={2299--2311},
  year={2020},
  publisher={Taylor \& Francis}
}

@misc{yahoo2025finance,
  author       = {{Yahoo Finance}},
  title        = {Historical Data: VIX Index and Gold Futures},
  howpublished = {\url{https://finance.yahoo.com/}},
  note         = {Accessed via the \texttt{quantmod} R package on 2025-10-17},
  year         = {2025}
}

\end{document}